\documentclass[preprint,12pt]{elsarticle}

\usepackage{amsmath,amssymb,amsfonts,amsthm,mathtools,mathdots}
\usepackage{xcolor}
\usepackage{bm}
%\usepackage{refcheck}

%%% Environements %%%
\newtheorem{theorem}{Theorem}
\newtheorem{remark}{Remark}

\newtheorem{lemma}{Lemma}
\newtheorem{definition}{Definition}

%%% Fonts

\newcommand{\cC}{\mathcal{C}}

\newcommand{\cS}{\mathcal{S}}

%---> Bold letters -----------------

\newcommand{\bfc}{\bm{c}}

\newcommand{\bfh}{\bm{h}}

\newcommand{\bfs}{\bm{s}}

\newcommand{\bfx}{\bm{x}}

\newcommand{\bzero}{\bm{0}}

%---> Changing style of inequalities ------

\renewcommand{\leq}{\leqslant}

\renewcommand{\geq}{\geqslant}

%---> Fields, etc ----------

\newcommand{\F}{\mathbb{F}}
\newcommand{\C}{\mathbb{C}}

\usepackage{mathtools}

\DeclarePairedDelimiter\abs{\lvert}{\rvert}

\DeclarePairedDelimiter\ceil{\lceil}{\rceil}
\DeclarePairedDelimiter\floor{\lfloor}{\rfloor}
\DeclarePairedDelimiter\parenv{\lparen}{\rparen}

\DeclarePairedDelimiter\set{\{}{\}}

\DeclarePairedDelimiter\spn{\langle}{\rangle}

\newcommand{\eqdef}{\triangleq}
\newcommand{\T}{\intercal}
\DeclareMathOperator{\wt}{wt}
\DeclareMathOperator{\supp}{supp}
\newcommand{\bch}{\mathrm{BCH}}
\DeclareMathOperator{\tr}{Tr}
\newcommand{\symdif}{\triangle}

\begin{document}

\begin{frontmatter}

\title{The Second Generalized Covering Radius of Binary Primitive Double-Error-Correcting BCH Codes}
\author[1]{Lev Yohananov\corref{cor1}}
\ead{levyuhananov@gmail.com}
\author[1,2]{Moshe Schwartz}
\ead{schwartz.moshe@mcmaster.ca}
\address[1]{School of Electrical and Computer Engineering, Ben-Gurion University of the Negev, Beer Sheva 8410501, Israel}
\address[2]{Department of Electrical and Computer Engineering, McMaster University, Hamilton, Ontario L8S 4K1, Canada}
\cortext[cor1]{Corresponding author. }

\begin{abstract}
We completely determine the second covering radius for binary primitive double-error-correcting BCH codes. As part of this process, we provide a lower bound on the second covering radius for binary primitive BCH codes correcting more than two errors.
\end{abstract}

\begin{keyword}
linear codes \sep generalized covering radius \sep BCH codes
\MSC[2020] 11T71 \sep 11T23 \sep 94B05 \sep 94B15
\end{keyword}

\end{frontmatter}

\section{Introduction}

A $q$-ary $[n,k]$ code, $\cC$, is simply a $k$-dimensional subspace of $\F_q^n$, where $\F_q$ denotes the finite field of size $q$. The elements of $\cC$ are called codewords. One common way of describing such a code is by using a parity-check matrix, $H\in\F_q^{(n-k)\times n}$, such that $\bfc\in \cC$ if and only if $H\bfc^\T=\bzero^\T$, namely, $\cC=\ker H$. For a general vector $\bfx\in\F_q^n$, we call $H\bfx^\T$ the syndrome of $\bfx$, and thus the codewords of $\cC$ are exactly the vectors with syndrome $\bzero^\T$.

Since codes are also geometric objects, we need to define a relevant metric. We shall employ the commonly used Hamming metric. For any two vectors, $\bfx=(x_1,\dots,x_n), \bfx'=(x'_1,\dots,x'_n)\in\F_q^n$, the Hamming distance between them is defined by
\[
d(\bfx,\bfx')\eqdef \abs*{\set*{1\leq i\leq n ~:~ x_i\neq x'_i}}.
\]
In this metric, a ball of radius $r$ centered at $\bfx$ is then defined as
\[
B_r(\bfx) \eqdef \set*{ \bfx'\in\F_q^n ~:~ d(\bfx,\bfx')\leq r}.
\]
We conveniently define the minimum distance between a vector $\bfx$ and a set of vectors $\cC$ by
\[
d(\bfx,\cC) \eqdef \min_{\bfc\in\cC} d(\bfx,\bfc).
\]

An important parameter associated with a code, $\cC$, is its covering radius, $R(\cC)$, defined as the largest distance from a point in the space to the code,
\[
R(\cC) \eqdef \max_{\bfx\in\F_q^n}d(\bfx,\cC)=\max_{\bfx\in\F_q^n}\min_{\bfc\in\cC} d(\bfx,\bfc).
\]
Equivalently, the covering radius of $\cC$ is the minimum integer $r$ such that the union of balls of radius $r$ centered at the codewords of $\cC$ is the entire space, i.e., $\bigcup_{\bfc\in\cC}B_r(\bfc) = \F_q^n$. While these two definitions are geometric in nature, yet another equivalent definition for the covering radius is algebraic. It is well known~\cite{CohHonLitLob97}, that $R(\cC)$ is also the minimum integer $r$ such that any syndrome $\bfs^\T\in\F_q^{n-k}$ may be shown as a linear combination of at most $r$ columns of $H$. If we denote the columns of $H$ by $\bfh_1^\T,\dots,\bfh_n^\T$, then $R(\cC)$ is the minimum integer $r$ such that for any $\bfs^\T\in\F_q^{n-k}$ there exist indices $i_1,\dots,i_r$ (that may depend on $\bfs^\T$) such that $\bfs^\T\in\spn{\bfh^\T_{i_1},\dots,\bfh^\T_{i_r}}$, where $\spn{\cdot}$ denotes the linear span. While several parity-check matrices may describe the same code, the covering radius does not depend on the choice of a parity-check matrix (see~\cite{CohHonLitLob97}).

When codes are studied with their covering radius in mind, they are usually called covering codes. These have been extensively studied and have a multitude of applications, ranging from general covering problems, through combinatorial designs, and ending with football pools. In engineering, they are commonly used in compression schemes as well as writing on constrained memories. For all of these and many other results and applications, the reader is referred to~\cite{CohHonLitLob97}.

Recently, in~\cite{EliFirSch21a}, motivated by an application to private information retrieval (PIR) schemes, the notion of the covering radius of a code has been generalized.

\begin{definition}
\label{def:gcr}
Let $\cC$ be an $[n,k]$ $q$-ary code, and let $H\in\F_q^{(n-k)\times n}$ be a parity-check matrix for it whose columns are $\bfh^\T_1,\dots,\bfh^\T_n$. The \emph{$t$-order generalized covering radius of $\cC$}, denoted $R_t(\cC)$, is the minimum integer $r$ such that for any $t$ vectors, $\bfs^\T_1,\dots,\bfs^\T_t\in\F_q^{n-k}$ there exist indices $i_1,\dots,i_r$ (that may depend on $\bfs^\T_1,\dots,\bfs^\T_t$) such that $\set{\bfs^\T_1,\dots,\bfs^\T_t}\subseteq \spn{\bfh_{i_1}^\T,\dots,\bfh_{i_r}^\T}$.
\end{definition}

The $t$-order generalized covering radius is indeed a generalization of the covering radius since $R_1(\cC)=R(\cC)$. This definition creates a generalized covering radius hierarchy, $R_1(\cC),R_2(\cC),\dots$, which mirrors the generalized Hamming weights of codes defined in~\cite{Wei91}. Indeed, connections between the two were noted in~\cite{EliFirSch21a,LanRav23}. Other interesting connections exist with $t$-elevated codes. If $\cC$ is a $q$-ary code with a parity-check matrix $H$, then its $t$-elevated code, $\cC_t$, is the $q^t$-ary code with the same parity-check matrix. These codes were studied in a different context in~\cite{HelKloMyk77,Klo78,Hel79}. Obviously, $\cC_1=\cC$. It was shown in~\cite{EliFirSch21a} that $R_t(\cC_1)=R_1(\cC_t)$. Finally, apart from PIR schemes, generalized covering has found uses in generalized football matches~\cite{EliSch24} and quantized linear computations~\cite{RamRavTam24}.

Considerable effort in the study of covering codes has been devoted to the study of the covering radius of error-correcting codes. Apart from the trivial case of perfect codes (Hamming and Golay codes), we mention as some examples cyclic codes~\cite{Hel85}, MDS codes in general~\cite{GabKlo98,BarGiuPla15}, and Reed-Solomon codes in particular~\cite{Dur94,Ost99}, Melas codes~\cite{ShiHelOzbSol22}, and Zetterberg-type codes~\cite{ShiHelOzb23}, all of these are over $\F_q$. In the binary case, $\F_2$, we mention perhaps the most extensively studied family, Reed-Muller codes \cite{HelKloMyk78,CohLit92,Hou93,Hou97,Hou06,CarMes06}, as well as Preparata codes~\cite{CohKarMatSch85}, and Zetterberg codes~\cite{JinChaLeeCheChe10}. We emphasize that only in rare cases the exact covering radius is found, where in most cases only bounds on it are derived. To date, of these codes, the generalized covering radius hierarchy has been determined for Hamming codes codes~\cite{EliFirSch21a}, repetition codes~\cite{EliWeiSch22}, and bounded for Reed-Muller codes~\cite{EliWeiSch22}.

The focus of this paper is another well-known family of codes, BCH codes, of which binary primitive codes are the most studied, due to their many applications in communication systems.

\begin{definition}
\label{def:bch}
Let $m,e$ be positive integers. The \emph{binary primitive BCH code of length $2^m-1$ and designed distance $2e+1$}, denoted $\bch(e,m)$, is a cyclic code over $\F_2$ with parameters $[n=2^m-1,k\geq 2^m-me-1]$, which is capable of correcting $e$ errors. Let $\alpha\in\F_{2^m}$ be a primitive element. Then $\bch(e,m)$ is defined by the parity-check matrix
\[
H(e,m)\eqdef \begin{pmatrix}
1 & \alpha & \alpha^2 & \dots & \alpha^{n-1} \\
1 & \alpha^3 & \alpha^6 & \dots & \alpha^{3(n-1)} \\
1 & \alpha^5 & \alpha^{10} & \dots & \alpha^{5(n-1)} \\
\vdots & \vdots & \vdots & & \vdots \\
1 & \alpha^{2e-1} & \alpha^{(2e-1)\cdot 2} & \dots & \alpha^{(2e-1)\cdot(n-1)}
\end{pmatrix}.
\]
\end{definition}

We note that in Definition~\ref{def:gcr}, $H(e,m)$ is an $e\times n$ matrix over $\F_{2^m}$. However, by using the well-known isomorphism between $\F_{2^m}$ and $\F_2^m$, we can replace each element of $\F_{2^m}$ by a column vector of length $m$ to obtain a binary parity-check matrix for $\bch(e,m)$ of size $me\times n$. We shall make use of this isomorphism throughout this work, conveniently writing binary vectors and matrices using elements from $\F_{2^m}$.

We further comment that if $2e-1\leq 2^{\ceil{m/2}}$, then $H(e,m)$, when viewed as a binary $me\times n$ matrix, is a full-rank matrix, namely $\bch(e,m)$ is exactly a $[2^m-1,2^m-me-1]$ code (see~\cite{CohHonLitLob97}). We shall therefore often require this condition.

We survey some of the known results on the covering radius of BCH codes. When $e=1$, $\bch(1,m)$ is simply the Hamming code. Since it is perfect, its covering radius is trivially
\[
R_1(\bch(1,m))=1.
\]
The generalized covering radius of $\bch(1,m)$ was determined in~\cite{EliFirSch21a},
\[
R_t(\bch(1,m))=\min\set{t,m},
\]
for all orders $t\geq 1$. For $2$-error-correcting BCH codes, it was shown in~\cite{GorPetZie60} that for all $m\geq 3$,
\begin{equation}
\label{eq:bch1}
R_1(\bch(2,m))=3.
\end{equation}

For $3$-error-correcting BCH codes it took the combined effort of~\cite{HorBer76,AssMat76,Hel78} to prove that for all $m\geq 5$,
\[
R_1(\bch(3,m))=5,
\]
relying heavily on the Carlitz-Uchiyama bound for character sums and a computer search. For general $e$-error-correcting BCH codes, provided $2^m\geq (2e-3)((2e-1)!)^2$, it was shown in~\cite{Coh97} that
\[
R_1(\bch(e,m))=2e-1.
\]
For smaller values of $m$
\[
2e-1\leq R_1(\bch(e,m))\leq 2e,
\]
(see~\cite[Chapter 10.3]{CohHonLitLob97}, and the references therein).

The main contribution of this paper is finding the exact second-order covering radius of binary primitive BCH codes:

\begin{theorem}
\label{th:main}
For all $m\geq 3$,
\[
R_2(\bch(2,m))=\begin{cases}
5 & m\neq 4,\\
6 & m=4.
\end{cases}
\]
\end{theorem}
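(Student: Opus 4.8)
The plan is to work throughout with the geometric reformulation afforded by Definition~\ref{def:gcr}. The columns of $H(2,m)$ are exactly the points $\bfp_x\eqdef(x,x^3)\in\F_{2^m}^2$ as $x$ ranges over $\F_{2^m}^*$, and $R_2(\bch(2,m))$ is the least $r$ such that for every pair of syndromes $\bfs_1,\bfs_2\in\F_{2^m}^2$ there are $r$ distinct nonzero $x$'s whose points have $\F_2$-span containing both $\bfs_j$; equivalently, each $\bfs_j$ is a subset sum $\sum_{i\in S_j}\bfp_{x_i}$. Since a subspace containing $\bfs_1,\bfs_2$ also contains $\bfs_1+\bfs_2$, it suffices to realize the two points. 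I first record the trivial upper bound $R_2\le 2R_1=6$: by \eqref{eq:bch1} each $\bfs_j$ is a sum of at most three distinct columns, and the union of the two representing sets has size at most $6$ and realizes both syndromes. Thus only the gap between $5$ and $6$ is ever at stake.

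For the lower bound $R_2\ge 5$, valid for all $m\ge 3$, I would exhibit two syndromes with vanishing first coordinate that no four columns can cover. Fix distinct nonzero $b_1,b_2\in\F_{2^m}$ and put $\bfs_j=(0,b_j)$. Suppose distinct nonzero $x_1,\dots,x_4$ cover both, say $\bfs_j=\sum_{i\in S_j}\bfp_{x_i}$. Reading first coordinates gives $\sum_{i\in S_j}x_i=0$, so both $S_1,S_2$ lie in $Z\eqdef\set{S\subseteq\set{1,2,3,4}:\sum_{i\in S}x_i=0}$, an $\F_2$-subspace under symmetric difference. Reading second coordinates, $b_j\neq 0$ forces $S_j\neq\emptyset$, and $b_1\neq b_2$ forces $S_1\neq S_2$. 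Hence $Z$ contains the three distinct elements $\emptyset,S_1,S_2$, so $\abs{Z}\ge 4$ and $\dim_{\F_2}Z\ge 2$; therefore $\set{x_1,\dots,x_4}$ has $\F_2$-rank at most $2$. But a $2$-dimensional $\F_2$-space holds only three nonzero elements, too few for four distinct nonzero $x_i$ -- a contradiction. The same rank idea, applied to $\bch(e,m)$ with syndromes supported away from the first coordinate, is what I would use for the lower bound promised when $e>2$.

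The heart of the matter is the upper bound $R_2\le 5$ for $m\neq 4$. If either $\bfs_j$ is a sum of at most two columns we are done, since a union of a $\le 2$-element and a $\le 3$-element representing set (the latter from \eqref{eq:bch1}) covers both with at most five columns; so assume each $\bfs_j=(a_j,b_j)$ needs exactly three. I would cover $\bfs_1$ by a triple $\set{x_1,x_2,x_3}$ and $\bfs_2$ by a triple $\set{x_3,x_4,x_5}$ sharing the single column $x_3$, for five columns total. Eliminating $x_1+x_2=a_1+x_3$ and $x_1^3+x_2^3=b_1+x_3^3$ via the identity $u^3+v^3=(u+v)^3+(u+v)uv$ shows that, for $x_3\neq a_1$, such $x_1,x_2$ exist in $\F_{2^m}$ iff $\tr(g_1(x_3))=\tr(1)$, where $g_j(x)\eqdef\frac{b_j+x^3}{(a_j+x)^3}$; symmetrically $\set{x_4,x_5}$ exists iff $\tr(g_2(x_3))=\tr(1)$. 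It thus suffices to produce one $x_3$ meeting both trace conditions while avoiding the finitely many values (namely $0,a_1,a_2$, roots forcing a vanishing product or a repeated column) that collapse the construction. The count of admissible $x_3$ equals $\frac14\sum_{x}\big(1+(-1)^{\tr(g_1(x))+\tr(1)}\big)\big(1+(-1)^{\tr(g_2(x))+\tr(1)}\big)$, with main term $2^m/4$ and error terms the additive character sums $\sum_x(-1)^{\tr(g_1(x))}$, $\sum_x(-1)^{\tr(g_2(x))}$, $\sum_x(-1)^{\tr(g_1(x)+g_2(x))}$ of rational functions of bounded degree; by the Weil (Carlitz--Uchiyama) bound each is $O(\sqrt{2^m})$, so at least $2^m/4-O(\sqrt{2^m})$ good $x_3$ remain, exceeding the $O(1)$ exclusions once $m$ is large. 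I expect the decisive obstacle here: the Weil bound is nontrivial only when $g_1,g_2,g_1+g_2$ are not of the Artin--Schreier-degenerate form $h^2+h+c$, and ruling this out -- together with the correlated configurations $a_1=a_2$ or $b_1=b_2$ -- requires a careful case analysis of the $g_j$. After fixing the resulting threshold $m_0$, the finitely many fields with $5\le m<m_0$, and the degenerate field $m=3$, would be cleared by direct computation.

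Finally, $m=4$ must be separated out. There the rank obstruction no longer forbids four distinct columns from lying in $\F_2$-rank $3$, so five columns can cover two syndromes $(0,b)$, and the character-sum estimate is far too weak over $\F_{16}$. I would establish $R_2(\bch(2,4))\ge 6$ by pinpointing an explicit pair of syndromes covered by no five of the fifteen columns -- a finite verification, as there are only $\binom{15}{5}$ column choices and $(2^8)^2$ syndrome pairs -- and combine it with the trivial bound $R_2\le 6$ to get equality. Finding a structural, rather than computational, reason for this failure over $\F_{16}$, where the cubing map is three-to-one, is the secondary difficulty I would anticipate.
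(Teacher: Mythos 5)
Your proposal is correct and follows essentially the same route as the paper: the trivial bound $R_2\le 2R_1=6$; a lower bound of $5$ from syndrome pairs of the form $(0,b_1),(0,b_2)$; an upper bound of $5$ by forcing the two representing triples to share a column via the quadratic trace criterion and Weil-type bounds on character sums of rational functions (with the sub-threshold cases $m=5,6$ and the degenerate $m=3$ settled by computation or known results); and a separate treatment of $m=4$. The only cosmetic differences are that the paper finishes the lower bound by a support-counting argument using the minimum distance of the Hamming code rather than your (equivalent) $\F_2$-rank argument, and that it proves $R_2(\bch(2,4))\ge 6$ by an explicit hand analysis of the syndromes $(0,1)$ and $(0,\alpha^3)$ in $\F_{16}$ rather than by exhaustive search.
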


We comment that the proof of Theorem~\ref{th:main} for the cases $m=5,6$ relies on a computer search.

The paper is organized as follows. In Section~\ref{sec:prem} we present some definitions and relevant known results. We study upper bounds in Section~\ref{sec:upper}, and lower bounds in Section~\ref{sec:lower}. Finally, in Section~\ref{sec:conc} we present the proof for Theorem~\ref{th:main}, and discuss some implications and open questions.

\section{Preliminaries}
\label{sec:prem}

In this section, we bring some notation, definitions, and technical results needed in the following sections.

We use $\F_q$ to denote the finite field of size $q$, and $\F_q^*\eqdef\F_q\setminus\set{0}$. Let $\F_q[x]$ denote the set of polynomials in the indeterminate $x$ with coefficients from $\F_q$, and similarly, $\F_q(x)$ to denote the rational functions in the indeterminate $x$ over $\F_q$. Assume $\bfx=(x_1,\dots,x_n)\in\F_q^n$ is some vector. The support of $\bfx$ is the set of indices containing non-zero entries,
\[
\supp(\bfx)\eqdef\set{i ~:~ x_i\neq 0}.
\]
The Hamming weight of $\bfx$ is its distance from $\bzero$,
\[
\wt(\bfx)\eqdef d(\bfx,\bzero)=\abs*{\supp(\bfx)}.
\]

Two important tools we shall use are characters and the trace function. Since we are interested in binary codes, we shall present these over a field with characteristic $2$. For a more in-depth treatment, the reader is referred to~\cite{LidNie97}. An additive character is a function $\chi:\F_{2^m}\to\C$ such that for all $\beta_1,\beta_2\in\F_{2^m}$,
\[
\chi(\beta_1+\beta_2)=\chi(\beta_1)\chi(\beta_2).
\]
A canonical way of constructing an additive character is by using the trace function, $\tr_{\F_{2^m}/\F_2}:\F_{2^m}\to\F_2$ defined by
\[
\tr_{\F_{2^m}/\F_2}(\beta) \eqdef \beta^{2^0}+\beta^{2^1}+\dots+\beta^{2^{m-1}}.
\]
Since we shall only use $\tr_{\F_{2^m}/\F_2}$ in this paper, we shall simply write $\tr$ for simplicity. It is well known that the trace function is linear, namely, for all $\beta_1,\beta_2\in\F_{2^m}$ we have
\[
\tr(\beta_1+\beta_2)=\tr(\beta_1)+\tr(\beta_2).
\]
One can easily verify that
\begin{equation}
\label{eq:canchar}
\chi(\beta)\eqdef(-1)^{\tr(\beta)},
\end{equation}
is indeed an additive character.

Bounds on character sums, e.g., the Carlitz-Uchiyama bound, are a strong tool (see for example~\cite{LidNie97}). We will make use of a variant, where the summation is over characters of rational functions~\cite[Theorem 1.1]{CocPin06}\footnote{The result in~\cite[Theorem 1.1]{CocPin06} is more general than presented here, using a mix of multiplicative and additive characters, and character extensions. We present a sub-case that is sufficient for our needs.}:

\begin{lemma}[\cite{CocPin06}]
\label{lem:charsum}
Consider the finite field $\F_q$ with characteristic $p$. Let $f(x)\in\F_{q}(x)$ be a rational function,
\[
f(x)=p(x)+\frac{r(x)}{q(x)}, \qquad \deg(r(x))<\deg(q(x)), \quad M\eqdef \deg(p(x)),
\]
where $p(x),r(x),q(x)\in\F_{q}[x]$ are polynomials, and $\gcd(r(x),q(x))=1$. Suppose further that $f(x)$ is non-constant, and that $M=0$ or $p\nmid M$. Write
\[
q(x) = \prod_{i=1}^{Q} q_i(x)^{m_i},
\]
where the $q_i(x)$ are distinct irreducible polynomials over $\F_{q}$, and $m_i\geq 1$. Define
\[
L \eqdef \sum_{i=1}^Q (m_i+1)\deg(q_i(x)).
\]
Then there exist complex numbers $\omega_j\in\C$, $1\leq j\leq M+L-1$, such that
\[
\sum_{\beta\in\F_q\setminus\cS} \chi(f(\beta)) = -\sum_{j=1}^{M+L-1} \omega_j,
\]
where $\cS$ is the set of poles of $f(x)$. Additionally, $\abs{\omega_j}=\sqrt{q}$ for all $j$, except for a single value, $j'$, satisfying $\abs{\omega_{j'}}=1$. Thus,
\[
\abs*{\sum_{\beta\in\F_q\setminus\cS} \chi(f(\beta))} \leq 1+(M+L-2)\sqrt{q}.
\]
\end{lemma}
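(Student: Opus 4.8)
The statement is a Weil-type bound for an additive character sum of a rational function, so my plan is to realize the numbers $\omega_j$ as the Frobenius eigenvalues on a degree-one piece of \'etale cohomology and then read off their number and their absolute values from the geometry of an Artin--Schreier covering. Concretely, attach to $\chi\circ f$ the rank-one Artin--Schreier sheaf $\cL_\chi(f)$ on $U\eqdef\mathbb{A}^1\setminus\set{\text{poles of }f}$, which is lisse and pure of weight $0$. Since $U$ is a smooth affine curve and $\cL_\chi(f)$ is nontrivial, the compactly supported cohomology satisfies $H^0_c=H^2_c=0$, so the Grothendieck--Lefschetz trace formula collapses to
\[
\sum_{\beta\in\F_q\setminus\cS}\chi(f(\beta))=-\operatorname{Tr}\parenv*{\mathrm{Frob}\mid H^1_c(U_{\overline{\F}_q},\cL_\chi(f))}=-\sum_{j}\omega_j,
\]
where the $\omega_j$ are the Frobenius eigenvalues on $H^1_c$. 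This already produces the exact shape $\sum\chi(f(\beta))=-\sum_j\omega_j$ of the claim, with the correct sign, for free.

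To make this trace identity tangible (in the characteristic-$2$ setting of the paper, and analogously for general $p$), I would pass to the affine Artin--Schreier curve $C\colon y^2+y=f(x)$. For each $\beta\in\F_q$ that is not a pole, the fibre equation $y^2+y=f(\beta)$ has exactly $1+\chi(f(\beta))$ solutions in $\F_q$, since $y^2+y=c$ is solvable precisely when $\tr(c)=0$. Summing over $\beta\in\F_q\setminus\cS$ converts the character sum into a rational-point count on $C$, reducing the whole estimate to the Hasse--Weil bound and thereby grounding the cohomological identity above.

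The count of eigenvalues --- equivalently the figure $M+L-1$ --- is a conductor computation via the Grothendieck--Ogg--Shafarevich (Riemann--Hurwitz) formula for $\cL_\chi(f)$. Each finite pole, namely each irreducible factor $q_i(x)$ of order $m_i$, is a wildly ramified place whose total local (Swan plus tame) conductor is $(m_i+1)\deg(q_i(x))$, exactly the summand defining $L$; the place at infinity is a pole precisely when $M=\deg(p(x))>0$, contributing $M+1$, and this is where the hypothesis ``$M=0$ or $p\nmid M$'' enters, guaranteeing that infinity genuinely ramifies with Swan conductor $M$ and that $f$ cannot be reduced modulo an Artin--Schreier relation to a function with smaller pole order. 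Subtracting the Euler characteristic contribution $2$ of the genus-$0$ base yields $\dim H^1_c=M+L-1$, hence exactly $M+L-1$ eigenvalues $\omega_j$. Deligne's Riemann Hypothesis for sheaves then forces the weight-one part to satisfy $\abs{\omega_j}=\sqrt{q}$, giving the crude bound $(M+L-1)\sqrt{q}$.

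The \emph{main obstacle}, and the only point where one must do better than the bare Weil bound, is isolating the single exceptional eigenvalue of absolute value $1$ that sharpens $(M+L-1)\sqrt{q}$ to $1+(M+L-2)\sqrt{q}$. This is a weight drop: $H^1_c$ of an open curve need not be pure, and the boundary terms at the poles and at infinity can contribute a one-dimensional weight-zero subspace. I would establish that this subspace is exactly one-dimensional by analysing the local monodromy at infinity together with the forget-supports map $H^1_c\to H^1$, and I would sanity-check it on small cases such as $f(x)=1/x$, where the sum is identically $-1=-\omega_1$ with $\abs{\omega_1}=1$ and $M+L-1=1$. Since the precise exceptional-root bookkeeping (and the fully general-characteristic version) is exactly the content of \cite[Theorem 1.1]{CocPin06}, in the write-up I would invoke their result for this refinement and use the curve/point-count picture above only to explain the structure of the statement.
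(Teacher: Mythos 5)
The paper offers no proof of this lemma: it is imported verbatim (as a special case, per the footnote) from \cite[Theorem 1.1]{CocPin06}, so there is no internal argument to compare yours against. Your cohomological sketch is the standard Weil-type route and its skeleton is sound --- the trace-formula identity with $H^0_c=H^2_c=0$ for a geometrically nontrivial Artin--Schreier sheaf on an open curve, the point count on $y^2+y=f(x)$, and the Grothendieck--Ogg--Shafarevich computation yielding $\dim H^1_c=M+L-1$. Three caveats, though. First, the step that actually sharpens $(M+L-1)\sqrt{q}$ to $1+(M+L-2)\sqrt{q}$ --- exhibiting exactly one eigenvalue of absolute value $1$ --- is the whole content of the refinement, and you explicitly delegate it back to \cite{CocPin06}; as written, the proposal is therefore an expository gloss on the citation rather than an independent proof. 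Second, your conductor bookkeeping tacitly assumes $p\nmid m_i$ at each finite pole: when $p\mid m_i$, an Artin--Schreier substitution $f\mapsto f+g^p-g$ can lower the pole order, the Swan conductor at that place is strictly less than $m_i$, and the exact count $\dim H^1_c=M+L-1$ fails (one then only gets an inequality, and the ``exactly one $\abs{\omega_{j'}}=1$'' bookkeeping needs separate justification, e.g.\ via whether the sheaf extends to a lisse sheaf at some removed point); this matters because the lemma as stated imposes no such condition on the $m_i$, only on $M$. Third, \cite{CocPin06} itself proves the result by Stepanov's elementary method, not by \'etale cohomology, so your route is genuinely different from the cited source's --- which is fine, but it means you cannot lean on their intermediate statements to patch the weight-drop and conductor issues above; you would need a reference in the Weil/Deligne--Katz literature instead. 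For the purposes of this paper none of this is load-bearing, since the authors simply cite the lemma, and in their application the $m_i$ are odd while $p=2$.
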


Another useful result is the following, taken from~\cite[Theorem 6.69 and 6.695]{Ber84}:

\begin{lemma}[\cite{Ber84}]
\label{lem:quadtr}
Consider the polynomial $f(x)=\beta_0+\beta_1 x + \beta_2 x^2\in \F_{2^m}[x]$. Then $f(x)$ has a root in $\F_{2^m}$ if and only if 
\[\tr(\beta_0 \beta_2/\beta_1^2)=0.\]
\end{lemma}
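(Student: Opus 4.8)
The plan is to reduce the general quadratic to a normalized Artin--Schreier form $y^2+y+c=0$ and then characterize solvability of that form via the trace. First I would dispose of the degenerate cases. The statement implicitly assumes $\beta_1\neq 0$, since $\beta_1^2$ appears in a denominator; and if $\beta_2=0$ then $f$ is linear with the unique root $x=\beta_0/\beta_1\in\F_{2^m}$, while the claimed condition reads $\tr(0)=0$, which holds, so the lemma is trivially true. Thus I may assume $\beta_1,\beta_2\neq 0$.

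For the main case, substitute $x=(\beta_1/\beta_2)y$, which is a bijection of $\F_{2^m}$ onto itself. A direct computation gives
\[
f\parenv*{\frac{\beta_1}{\beta_2}y}=\frac{\beta_1^2}{\beta_2}\parenv*{y^2+y+\frac{\beta_0\beta_2}{\beta_1^2}},
\]
so $f$ has a root in $\F_{2^m}$ if and only if $g(y)\eqdef y^2+y+c$ has one, where $c\eqdef\beta_0\beta_2/\beta_1^2$. It therefore suffices to prove that $g$ has a root in $\F_{2^m}$ if and only if $\tr(c)=0$.

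To that end I would study the Artin--Schreier map $\wp\colon\F_{2^m}\to\F_{2^m}$ given by $\wp(y)=y^2+y$. Since squaring is the Frobenius, $\wp$ is $\F_2$-linear, and its kernel is $\set{y:y^2=y}=\F_2$, of size $2$; hence its image has size $2^{m-1}$. The roots of $g$ are exactly the preimages of $c$ under $\wp$, so $g$ is solvable precisely when $c\in\operatorname{im}(\wp)$. On the other side, $\tr$ is $\F_2$-linear and surjective onto $\F_2$, so $\ker(\tr)$ also has size $2^{m-1}$. The key containment $\operatorname{im}(\wp)\subseteq\ker(\tr)$ follows from the Frobenius-invariance $\tr(y^2)=\tr(y)$, which gives $\tr(\wp(y))=\tr(y^2)+\tr(y)=0$ for every $y$. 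Comparing two sets of equal size $2^{m-1}$ forces $\operatorname{im}(\wp)=\ker(\tr)$, and thus $c\in\operatorname{im}(\wp)$ if and only if $\tr(c)=0$, completing the argument.

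The crux---and the only step requiring more than routine manipulation---is the equality $\operatorname{im}(\wp)=\ker(\tr)$. It rests on two standard facts about the trace: its surjectivity onto $\F_2$ (so that $\abs{\ker(\tr)}=2^{m-1}$) and the identity $\tr(y^2)=\tr(y)$ (which comes from cyclically reindexing the Frobenius powers in the defining sum of $\tr$). Given these, the dimension count closes the proof; I would cite them from the standard references on finite fields rather than reprove them.
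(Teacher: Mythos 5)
Your proof is correct and complete. Note that the paper does not prove this lemma at all---it imports it verbatim from Berlekamp's book (the cited Theorems 6.69 and 6.695), so there is no in-paper argument to compare against. What you have written is the standard self-contained derivation: the substitution $x=(\beta_1/\beta_2)y$ correctly yields $f(\beta_1 y/\beta_2)=(\beta_1^2/\beta_2)\bigl(y^2+y+\beta_0\beta_2/\beta_1^2\bigr)$, and the identification of the image of the Artin--Schreier map $\wp(y)=y^2+y$ with $\ker(\tr)$ via the containment $\tr(\wp(y))=\tr(y^2)+\tr(y)=0$ plus the counting argument $\abs{\operatorname{im}(\wp)}=2^{m-1}=\abs{\ker(\tr)}$ is exactly the right mechanism. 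Your handling of the degenerate case $\beta_2=0$ (where the condition reads $\tr(0)=0$ and the linear polynomial trivially has a root) and your observation that $\beta_1\neq 0$ is forced by the statement are both appropriate; the only facts you defer to the literature---surjectivity of the trace and $\tr(y^2)=\tr(y)$---are standard and reasonable to cite. As a small bonus, your argument also explains the remark the authors make immediately after the lemma: since $y^2+y+c$ has its two roots differing by an element of $\ker(\wp)=\F_2$, a quadratic over $\F_{2^m}$ with one root in the field automatically has both roots there.
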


We just observe the obvious, which is that $f(x)$ from Lemma~\ref{lem:quadtr} has a root in $\F_{2^m}$ if and only if it has two roots in $\F_{2^m}$ (not necessarily distinct), since it is quadratic, and its coefficients are from $\F_{2^m}$.

\section{Upper Bound}
\label{sec:upper}

Proving a tight upper bound on $R_2(\bch(2,m))$ requires an elaborate sequence of arguments, so we outline our strategy beforehand. By definition and~\eqref{eq:bch1}, any syndrome of $\bch(2,m)$ may be spanned by (at most) three columns of $H(2,m)$. Thus, given two syndromes, we can definitely span both of them by using at most six columns of $H(2,m)$, at most three for each syndrome. However, even in the worst case in which spanning each syndrome requires exactly three columns, we shall show that we can find two sets of three columns that intersect non-trivially, thus bringing the number of required columns to only five. This shall require solving a set of non-linear equations carefully.

We start with some technical lemmas before reaching the main result of the upper bound.
\begin{lemma}
\label{lem:singlesyndrome}
Assume $\beta_1,\beta_2\in\F_{2^m}$, $\beta_2\neq\beta_1^3$, and consider the following set of equations:
\begin{equation}
\label{eq:singlex}
\begin{split}
x_1+x_2+x_3&=\beta_1,\\
x_1^3+x_2^3+x_3^3&=\beta_2.
\end{split}
\end{equation}
Then this set of equations has a solution over $\F_{2^m}$ if and only if there exists $\theta\in\F_{2^m}^*$ such that $\tr(\theta)=0$ and $\sqrt[3]{(\beta_1^3+\beta_2)/\theta}$ exists. Additionally, the solution satisfies
\[
x_1 = \beta_1+\sqrt[3]{\frac{\beta_1^3+\beta_2}{\theta}}.
\]
\end{lemma}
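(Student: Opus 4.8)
The plan is to reduce the two-equation system to a single condition by exploiting the freedom in how three cube-power syndrome contributions can sum. First I would introduce the substitution $x_i = \beta_1 + y_i$ for each $i$, or something similar, to absorb the first (linear) equation $x_1+x_2+x_3=\beta_1$. Since the field has characteristic $2$, the first equation tells us that among $x_1,x_2,x_3$ we have $x_3 = x_1+x_2+\beta_1$, so I would eliminate $x_3$ and substitute into the cubic equation $x_1^3+x_2^3+x_3^3 = \beta_2$. The key algebraic fact to keep in mind is that for $a+b+c = \beta_1$ in characteristic $2$, the symmetric expression $a^3+b^3+c^3$ simplifies: expanding $(a+b+c)^3$ and using that $a+b+c=\beta_1$ should let me write $a^3+b^3+c^3 = \beta_1^3 + (\text{lower symmetric terms involving products})$. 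I expect the cubic equation to collapse to a relation of the form $\beta_1^3 + \beta_2 = (\text{some product of differences})$, which is where the cube root in the statement originates.

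The main structural step is to identify the quantity $\theta$. I anticipate that after the substitution, the expression $\beta_1^3 + \beta_2$ factors as $\theta \cdot w^3$ for a suitable $w \in \F_{2^m}$ and a suitable $\theta \in \F_{2^m}^*$, where $\theta$ plays the role of a normalized symmetric function (likely something like $\theta = e_2/w^2$ built from the elementary symmetric polynomials $e_1 = x_1+x_2+x_3$ and $e_2 = x_1x_2 + x_1x_3 + x_2x_3$). The claimed solution formula $x_1 = \beta_1 + \sqrt[3]{(\beta_1^3+\beta_2)/\theta}$ strongly suggests that $w = x_1 + \beta_1$, i.e., $w^3 = (\beta_1^3+\beta_2)/\theta$, so $\theta = (\beta_1^3+\beta_2)/w^3$ for the correct solution. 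I would work backwards from this formula: set $w \eqdef x_1 + \beta_1$ and verify that the cubic equation forces $(\beta_1^3+\beta_2)/w^3$ to equal a specific symmetric quantity $\theta$, then characterize exactly which $\theta$ arise.

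The trace condition $\tr(\theta) = 0$ is where Lemma~\ref{lem:quadtr} must enter. Once $x_1$ (equivalently $w$) is fixed, the remaining pair $x_2, x_3$ must satisfy $x_2 + x_3 = x_1 + \beta_1 = w$ and $x_2 x_3 = (\text{some expression in } w, \beta_1, \beta_2)$, so $x_2, x_3$ are the two roots of a quadratic $z^2 + wz + (x_2 x_3) = 0$. By Lemma~\ref{lem:quadtr}, this quadratic has roots in $\F_{2^m}$ precisely when the trace of the normalized product coefficient vanishes, and I expect this normalized coefficient to be exactly $\theta$ (or a simple transform of it). Thus the solvability of the original system becomes: there exists a valid choice of $w$ (equivalently, $(\beta_1^3+\beta_2)/\theta$ is a cube and $w$ is its cube root) such that the associated quadratic has trace zero, i.e., $\tr(\theta) = 0$. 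The nondegeneracy hypothesis $\beta_2 \neq \beta_1^3$ ensures $\beta_1^3 + \beta_2 \neq 0$, so the cube root is of a nonzero element and $\theta \neq 0$, justifying $\theta \in \F_{2^m}^*$.

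The hard part will be carrying out the symmetric-function bookkeeping cleanly in characteristic $2$ and pinning down the precise form of $\theta$ so that both the cube-root condition and the trace condition come out matching the statement exactly. In particular, I must verify that every $\theta \in \F_{2^m}^*$ with $\tr(\theta)=0$ and $(\beta_1^3+\beta_2)/\theta$ a cube genuinely yields a valid solution (the ``if'' direction, by explicitly constructing $x_1, x_2, x_3$ and checking both equations), and conversely that any solution forces such a $\theta$ to exist (the ``only if'' direction). The cube-root existence is a nontrivial constraint because cubing is not a bijection on $\F_{2^m}^*$ when $3 \mid 2^m - 1$ (i.e., when $m$ is even), so I would need to be careful that the statement's phrasing ``$\sqrt[3]{(\beta_1^3+\beta_2)/\theta}$ exists'' correctly accounts for both the $m$-odd case (where cubing is a bijection and the root always exists) and the $m$-even case (where it is a genuine restriction on $\theta$).
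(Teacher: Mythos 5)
Your plan follows the paper's proof essentially step for step: translate by $\beta_1$ so the linear equation becomes homogeneous, use the characteristic-$2$ identity to collapse $y_1^3+y_2^3+y_3^3$ into the product $y_1y_2y_3$, set $w=x_1+\beta_1$ and $\theta=(\beta_1^3+\beta_2)/w^3$, and apply Lemma~\ref{lem:quadtr} to the quadratic satisfied by the remaining two unknowns to get the trace condition, handling both directions and the cube-root subtlety exactly as the paper does. The approach is correct and not meaningfully different from the paper's.
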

\begin{proof}
We first observe that by changing $y_i\eqdef x_i+\beta_1$, for all $i\in\set{1,2,3}$, we obtain the equivalent set of equations to~\eqref{eq:singlex}:
\begin{align}
y_1+y_2+y_3&=0, \label{eq:y1}\\
y_1^3+y_2^3+y_3^3&=\beta_1^3+\beta_2. \label{eq:y3}
\end{align}
We further simplify the left-hand side of~\eqref{eq:y3} by using~\eqref{eq:y1} and the characteristic of the field, $2$:
\begin{align*}
y_1^3+y_2^3+y_3^3 &= (y_1+y_2+y_3)^3 + y_1^2 y_2 + y_1^2 y_3 + y_2^2 y_1 + y_2^2 y_3 + y_3^2 y_1 + y_3^2 y_2 \\
&= y_1 y_2 (y_1+y_2) + y_1 y_3 (y_1 + y_3) + y_2 y_3 (y_2+y_3) \\
&= y_1 y_2 y_3 + y_1 y_2 y_3 + y_1 y_2 y_3 \\
& = y_1 y_2 y_3.
\end{align*}
We thus have the following equivalent set of equations to~\eqref{eq:singlex}:
\begin{equation}
\label{eq:singley}
\begin{split}
y_1+y_2+y_3&=0, \\
y_1 y_2 y_3&=\beta_1^3+\beta_2.
\end{split}
\end{equation}

In the first direction, assume $y_i=\lambda_i\in\F_{2^m}$, $i\in\set{1,2,3}$, is a solution to~\eqref{eq:singley}. Notice that $\lambda_1=\lambda_2=\lambda_3=0$ is not a solution since $\beta_1^3+\beta_2\neq 0$ by assumption. W.l.o.g., assume $\lambda_1\neq 0$. Define the polynomial
\[
f(x)\eqdef (x+\lambda_1)(x+\lambda_2)(x+\lambda_3).
\]
Expanding $f(x)$ we get,
\[
f(x) = x^3 + \sigma_2(\lambda_1,\lambda_2,\lambda_3)x+\beta_1^3+\beta_2,
\]
where we used~\eqref{eq:singley} to get $\lambda_1+\lambda_2+\lambda_3=0$, and $\lambda_1\lambda_2\lambda_3=\beta_1^3+\beta_2$, and where $\sigma_2(\lambda_1,\lambda_2,\lambda_3)$ is the second elementry symmetric function,
\[
\sigma_2(\lambda_1,\lambda_2,\lambda_3)\eqdef \lambda_1\lambda_2+\lambda_1\lambda_3+\lambda_2\lambda_3.
\]
But now,
\[
0=f(\lambda_1)=\lambda_1^3+\sigma(\lambda_1,\lambda_2,\lambda_3)\lambda_1+\beta_1^3+\beta_2,
\]
so
\[
\sigma_2(\lambda_1,\lambda_2,\lambda_3)=\lambda_1^2+\frac{\beta_1^3+\beta_2}{\lambda_1}.
\]
Thus,
\[
f(x) = x^3 + \parenv*{\lambda_1^2+\frac{\beta_1^3+\beta_2}{\lambda_1}}x+\beta_1^3+\beta_2=(x+\lambda_1)\parenv*{x^2+\lambda_1 x + \frac{\beta_1^3+\beta_2}{\lambda_1}}.
\]
Define
\[
\theta\eqdef \frac{\beta_1^3+\beta_2}{\lambda_1^3}.
\]
Since $x^2+\lambda_1 x + \frac{\beta_1^3+\beta_2}{\lambda_1}$ needs to have two roots (namely, $\lambda_2$ and $\lambda_3$), by Lemma~\ref{lem:quadtr}, we must have
\[
\tr\parenv*{\frac{\beta_1^3+\beta_2}{\lambda_1^3}}=\tr(\theta)=0,
\]
and obviously, $\sqrt[3]{(\beta_1^3+\beta_2)/\theta}=\lambda_1$ exists (note that $\theta\neq 0$ since $\beta_1^3+\beta_2\neq 0$ by assumption). For the additional claim, the solution to~\eqref{eq:singlex} may be obtained from the solution to~\eqref{eq:singley} by adding $\beta_1$, and thus we get $x_i=\lambda_i+\beta_1$ for all $i\in\set{1,2,3}$, as claimed.

In the other direction, assume $\theta\in\F_{2^m}^*$ such that $\tr(\theta)=0$ and that $\sqrt[3]{(\beta_1^3+\beta_2)/\theta}$ exists. Define
\[
\lambda_1=\sqrt[3]{\frac{\beta_1^3+\beta_2}{\theta}},
\]
and note that $\lambda_1\neq 0$ since $\beta_1^3+\beta_2\neq 0$ by assumption. By Lemma~\ref{lem:quadtr}, the polynomial $x^2+\lambda_1 x + \frac{\beta_1^3+\beta_2}{\lambda_1}$ has two roots, which we shall denote by $\lambda_2,\lambda_3\in\F_{2^m}$. We then observe the polynomial,
\begin{align*}
(x+\lambda_1)(x+\lambda_2)(x+\lambda_3)&=(x+\lambda_1)\parenv*{x^2+\lambda_1 x + \frac{\beta_1^3+\beta_2}{\lambda_1}}\\
&=x^3 + \parenv*{\lambda_1^2+\frac{\beta_1^3+\beta_2}{\lambda_1}}x+\beta_1^3+\beta_2.
\end{align*}
By comparing the coefficient of $x^2$ and the free coefficient on both sides we obtain:
\begin{align*}
\lambda_1+\lambda_2+\lambda_3 &= 0 \\
\lambda_1\lambda_2\lambda_3 &= \beta_1^3+\beta_2,
\end{align*}
thus finding a solution to~\eqref{eq:singley}.
\end{proof}

\begin{lemma}
\label{lem:findL}
For all $m\geq 7$, and for all $\gamma_1,\gamma_2,\delta\in\F_{2^m}$, there exists $\lambda\in\F_{2^m}\setminus\set{0,\delta}$ such that
\begin{equation}
\label{eq:lambdareq}
\tr\parenv*{\frac{\gamma_1}{\lambda^3}}=\tr\parenv*{\frac{\gamma_2}{(\lambda+\delta)^3}}=0.
\end{equation}
\end{lemma}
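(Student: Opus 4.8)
The plan is to prove the lemma by a character-sum counting argument: I would count the number $N$ of admissible $\lambda$ and show it is strictly positive as soon as $q\eqdef 2^m$ is large enough, arranging the estimates so that the threshold lands precisely at $m\geq 7$. Using the canonical character $\chi(\beta)=(-1)^{\tr(\beta)}$ from~\eqref{eq:canchar}, I would write the indicator of the event $\tr(\beta)=0$ as $(1+\chi(\beta))/2$. Treating first the generic case $\delta\neq 0$ and $\gamma_1,\gamma_2\neq 0$ (so that the rational functions below have genuine poles exactly at $\lambda\in\set{0,\delta}$, which is precisely the excluded set), the count becomes
\[
N = \sum_{\lambda\in\F_{2^m}\setminus\set{0,\delta}} \frac{1+\chi(\gamma_1/\lambda^3)}{2}\cdot\frac{1+\chi(\gamma_2/(\lambda+\delta)^3)}{2}.
\]
Expanding the product yields a main term $\tfrac14(q-2)$ plus three character sums $S_1,S_2,S_3$, associated respectively to $\gamma_1/\lambda^3$, $\gamma_2/(\lambda+\delta)^3$, and their sum $h(\lambda)\eqdef\gamma_1/\lambda^3+\gamma_2/(\lambda+\delta)^3$.

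Second, I would bound each $S_i$ via Lemma~\ref{lem:charsum}. For $S_1$ the rational function is $\gamma_1/x^3$, with $M=0$ and $q(x)=x^3$ giving $L=4$, so the sum over $\F_q\setminus\set{0}$ is at most $1+2\sqrt q$ in absolute value; after correcting for the omitted term at $\lambda=\delta$ one gets $\abs{S_1}\leq 2+2\sqrt q$, and symmetrically $\abs{S_2}\leq 2+2\sqrt q$. The decisive sum is $S_3$: writing $h(x)=(\gamma_1(x+\delta)^3+\gamma_2 x^3)/(x^3(x+\delta)^3)$, I would check that the numerator equals $\gamma_1\delta^3$ at $x=0$ and $\gamma_2\delta^3$ at $x=\delta$, hence never vanishes at a pole, so the fraction is reduced with denominator $x^3(x+\delta)^3$, proper ($M=0$), and non-constant (it has poles). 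Its two distinct linear factors, each of multiplicity $3$, give $L=8$, whence $\abs{S_3}\leq 1+6\sqrt q$.

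Third, combining these estimates gives $N\geq \tfrac14(q-7-10\sqrt q)$, which is positive exactly when $\sqrt q>5+\sqrt{32}$, i.e. $q\geq 2^7$; this is precisely the hypothesis $m\geq 7$ (and $q=2^6$ already fails the inequality, explaining the cutoff). Finally I would dispose of the degenerate cases separately, all of which yield even larger counts: if $\gamma_1=\gamma_2=0$ every $\lambda\notin\set{0,\delta}$ works; if exactly one $\gamma_i$ vanishes the count reduces to a single character sum bounded by $\tfrac12(q-4-2\sqrt q)>0$; and if $\delta=0$ the two conditions collapse to $\tr(\gamma_1/\lambda^3)=\tr(\gamma_2/\lambda^3)=0$ over $\lambda\neq 0$, whose count expands into the three characters of $\gamma_1/\lambda^3$, $\gamma_2/\lambda^3$, and $(\gamma_1+\gamma_2)/\lambda^3$, each bounded by $1+2\sqrt q$, giving $N\geq\tfrac14(q-4-6\sqrt q)>0$.

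I expect the main obstacle to be the bookkeeping around $S_3$: verifying the reducedness and non-constancy hypotheses of Lemma~\ref{lem:charsum} for the combined rational function and computing $M$ and $L$ correctly, while simultaneously arranging the correction terms for $S_1,S_2$ so that the final arithmetic threshold is sharp at $m\geq 7$ rather than a weaker bound. The several degenerate cases are routine but must each be checked to guarantee the statement holds uniformly for all $\gamma_1,\gamma_2,\delta\in\F_{2^m}$.
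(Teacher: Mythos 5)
Your proposal is correct and follows essentially the same route as the paper: write the indicator of each trace condition as $\tfrac12(1+\chi(\cdot))$, expand the product, and bound the three resulting character sums via Lemma~\ref{lem:charsum} (with $M=0$, $L=4$ for the single-pole sums and $L=8$ for the combined one), then handle the degenerate cases separately. Your bookkeeping is in fact slightly more careful than the paper's --- you add the $+1$ correction for the point excluded from each single-pole sum and explicitly verify that $\gcd$ condition for the combined rational function --- and the resulting constant ($q-7-10\sqrt q$ versus the paper's $q-5-10\sqrt q$) still yields positivity exactly for $m\geq 7$.
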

\begin{proof}
We divide the proof into several cases, and start with the most difficult case.

\emph{Case 1:} $\gamma_1,\gamma_2,\delta\neq 0$. We make use of the canonical additive character $\chi$ given in~\eqref{eq:canchar}. We define the following function:
\[
f(x)=\frac{1+\chi(\gamma_1/x^3)}{2}\cdot\frac{1+\chi(\gamma_2/(x+\delta)^3)}{2}.
\]
We observe that $f(x)=1$ exactly when both $\tr(\gamma_1/x^3)=0$ and $\tr(\gamma_2/(x+\delta)^3)=0$. Otherwise, $f(x)=0$. Thus, the number of $\lambda\in\F_{2^m}\setminus\set{0,\delta}$ that satisfy~\eqref{eq:lambdareq} is
\begin{align*}
\sum_{x\in\F_{2^m}\setminus\set{0,\delta}} f(x) &=
\sum_{x\in\F_{2^m}\setminus\set{0,\delta}}\frac{1}{4}\bigg(1+\chi\parenv*{\frac{\gamma_1}{x^3}}+\chi\parenv*{\frac{\gamma_2}{(x+\delta)^3}}\\
&\qquad\qquad\qquad\qquad +\chi\parenv*{\frac{\gamma_1}{x^3}}\chi\parenv*{\frac{\gamma_2}{(x+\delta)^3}}\bigg)\\
&=
\frac{2^m-2}{4}+ \frac{1}{4}\sum_{x\in\F_{2^m}\setminus\set{0,\delta}}\chi\parenv*{\frac{\gamma_1}{x^3}}+\frac{1}{4}\sum_{x\in\F_{2^m}\setminus\set{0,\delta}}\chi\parenv*{\frac{\gamma_2}{(x+\delta)^3}}\\
&\qquad\qquad+\frac{1}{4}\sum_{x\in\F_{2^m}\setminus\set{0,\delta}}\chi\parenv*{\frac{\gamma_1(x+\delta)^3+\gamma_2 x^3}{(x(x+\delta))^3}}.
\end{align*}
We now use Lemma~\ref{lem:charsum} and get that
\begin{align*}
\abs*{\sum_{x\in\F_{2^m}\setminus\set{0,\delta}}\chi\parenv*{\frac{\gamma_1}{x^3}}} &\leq 1+2\sqrt{2^m},\\
\abs*{\sum_{x\in\F_{2^m}\setminus\set{0,\delta}}\chi\parenv*{\frac{\gamma_2}{(x+\delta)^3}}} &\leq 1+2\sqrt{2^m},\\
\abs*{\sum_{x\in\F_{2^m}\setminus\set{0,\delta}}\chi\parenv*{\frac{\gamma_1(x+\delta)^3+\gamma_2 x^3}{(x(x+\delta))^3}}} &\leq 1+6\sqrt{2^m}.
\end{align*}
Thus,
\[
\sum_{x\in\F_{2^m}\setminus\set{0,\delta}} f(x) \geq \frac{1}{4}\parenv*{2^m-5-10\sqrt{2^m}}>0,
\]
and there exists $\lambda$ as desired, where the last strong inequality follows from $m\geq 7$.

\emph{Case 2:} $\gamma_1,\gamma_2\neq 0$ but $\delta=0$. We proceed as in Case 1, defining the same function $f(x)$, and getting the simpler expression,
\begin{align*}
\sum_{x\in\F_{2^m}^*} f(x) &=
\frac{2^m-1}{4}+ \frac{1}{4}\sum_{x\in\F_{2^m}^*}\chi\parenv*{\frac{\gamma_1}{x^3}}+\frac{1}{4}\sum_{x\in\F_{2^m}^*}\chi\parenv*{\frac{\gamma_2}{x^3}}\\
&\qquad\qquad+\frac{1}{4}\sum_{x\in\F_{2^m}^*}\chi\parenv*{\frac{\gamma_1+\gamma_2}{x^3}}.
\end{align*}
Bounds on the first two character sums are essentially the same as in Case 1. The last summand is either $(2^m-1)/4$ if $\gamma_1=\gamma_2$, and otherwise, by Lemma~\ref{lem:charsum},
\[
\abs*{\sum_{x\in\F_{2^m}^*}\chi\parenv*{\frac{\gamma_1+\gamma_2}{x^3}}} \leq 1+2\sqrt{2^m}.
\]
In any case,
\[
\sum_{x\in\F_{2^m}^*} f(x) \geq \frac{1}{4}\parenv*{2^m-4-6\sqrt{2^m}}>0,
\]
and there exists $\lambda$ as desired, where the last strong inequality follows from $m\geq 7$.

\emph{Case 3:} $\gamma_1=0$ or $\gamma_2=0$. This simply eliminates constraints on the desired $\lambda$, so the set of solutions here is a superset of those of Case 1 if $\delta\neq 0$, or those of Case 2 if $\delta=0$.
\end{proof}

We are now ready to prove the main upper bound on the second covering radius of $\bch(2,m)$.

\begin{theorem}
\label{th:upper}
For all $m\geq 7$, we have
\[
R_2(\bch(2,m)) \leq 5.
\]
\end{theorem}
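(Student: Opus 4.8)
The plan is to exploit the freedom in how each syndrome is written as a sum of (at most) three columns, and to steer both representations so that they share a common column, bringing the total from six down to five. Writing each column of $H(2,m)$ as a pair $(\beta,\beta^3)$ with $\beta\in\F_{2^m}^*$, a syndrome is an element $(a_1,a_2)\in\F_{2^m}^2$, and—because we work over $\F_2$—``spanned by a set of columns'' means ``equal to a subset sum of that set.'' Thus it suffices to show that for any two syndromes $\bfs=(a_1,a_2)$ and $\bfs'=(b_1,b_2)$ there is a single field element $\lambda$ that can serve as one of the three summands for $\bfs$ and, simultaneously, one of the three summands for $\bfs'$.

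First I would dispose of the degenerate syndromes. If $a_2=a_1^3$, then $\bfs$ is either $\bzero$ (when $a_1=0$) or the single column $(a_1,a_1^3)$, so it is spanned by at most one column; combined with the three columns guaranteed for $\bfs'$ by $R_1(\bch(2,m))=3$ from~\eqref{eq:bch1}, the total is at most four. The symmetric case $b_2=b_1^3$ is identical. Hence I may assume $\gamma_1\eqdef a_1^3+a_2\neq 0$ and $\gamma_2\eqdef b_1^3+b_2\neq 0$, which is exactly the hypothesis $\beta_2\neq\beta_1^3$ under which Lemma~\ref{lem:singlesyndrome} operates.

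Next I would translate ``$\lambda$ is usable for $\bfs$'' into a trace condition. Asking for $y,z$ with $\lambda+y+z=a_1$ and $\lambda^3+y^3+z^3=a_2$ is, after the substitution $\mu\eqdef\lambda+a_1$ and the identity $y_1y_2y_3=\beta_1^3+\beta_2$ from Lemma~\ref{lem:singlesyndrome}, equivalent to the splitting of $w^2+\mu w+\gamma_1/\mu$; by Lemma~\ref{lem:quadtr} this happens precisely when $\mu\neq 0$ and $\tr(\gamma_1/\mu^3)=0$. The same computation for $\bfs'$, with $\mu'=\lambda+b_1=\mu+\delta$ where $\delta\eqdef a_1+b_1$, yields $\mu\neq\delta$ and $\tr(\gamma_2/(\mu+\delta)^3)=0$. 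Therefore a common column $\lambda=\mu+a_1$ exists as soon as I can find $\mu\in\F_{2^m}\setminus\set{0,\delta}$ with $\tr(\gamma_1/\mu^3)=\tr(\gamma_2/(\mu+\delta)^3)=0$, and this is delivered verbatim by Lemma~\ref{lem:findL} (this is where $m\geq 7$ enters). Having fixed $\mu$, I recover $\lambda$ together with the companion summands $y,z$ for $\bfs$ and $y',z'$ for $\bfs'$.

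Finally I would check the column count. Since $\gamma_1\neq 0$, the three shifted values $\mu,\,y+a_1,\,z+a_1$ are roots of a cubic with nonzero free term $\gamma_1$, hence distinct, so $\lambda,y,z$ are three distinct field elements, and likewise $\lambda,y',z'$; as sets they share $\lambda$, so their union has at most five elements. The only genuine worry is that some element equals $0$ and hence corresponds to no column: but at most one element in each triple can vanish (they are distinct), dropping it merely lowers the count, and if $\lambda=0$ then each syndrome is already a subset sum of at most two nonzero columns, for a union of at most four. In every sub-case the two syndromes are subset sums of a common set of at most five columns, giving $R_2(\bch(2,m))\leq 5$. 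I expect the main obstacle to be not any single deep estimate—the decisive analytic input is already packaged in Lemma~\ref{lem:findL}—but rather the careful setup of the shared-column trace conditions and the verification that the resulting five field elements really are valid, distinct, nonzero columns.
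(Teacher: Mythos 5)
Your proposal is correct and takes essentially the same route as the paper: reduce to the case where each syndrome genuinely needs three columns, convert ``$\lambda$ can serve as a shared summand'' into the pair of trace conditions $\tr(\gamma_1/\mu^3)=\tr(\gamma_2/(\mu+\delta)^3)=0$ via Lemma~\ref{lem:singlesyndrome} and Lemma~\ref{lem:quadtr}, and invoke Lemma~\ref{lem:findL} to produce the common column. Your extra care about distinctness and nonvanishing of the summands is sound and even slightly more explicit than the paper's treatment, though the justification should be that the cubic has zero $w^2$-coefficient \emph{and} nonzero free term (so two coinciding roots would force the third to vanish), not merely that the free term is nonzero.
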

\begin{proof}
Let $H(2,m)$ be the parity-check matrix for $\bch(2,m)$, as given in Definition~\ref{def:bch}. Denote the columns of $H(2,m)$ by $\bfh_1^\T,\dots,\bfh_{2^m-1}^\T$, and define $H\eqdef\set{\bfh_1^\T,\dots,\bfh_{2^m-1}^\T}$.

Consider two arbitrary syndromes,
\begin{align*}
\bfs_1^\T&=\begin{pmatrix} \beta_1 \\ \beta_2\end{pmatrix}, &
\bfs_2^\T&=\begin{pmatrix} \beta_3 \\ \beta_4\end{pmatrix}, &
\end{align*}
with $\beta_1,\dots,\beta_4\in\F_{2^m}$. Our goal is to show that we can always find a set of no more than five columns of $H(2,m)$ whose span includes $\bfs_1^\T$ and $\bfs_2^\T$.

By the first covering radius of $\bch(2,m)$ given in~\eqref{eq:bch1}, there exist two subsets of columns, $H_1,H_2\subseteq H$, with $\abs{H_1},\abs{H_2}\leq 3$, such that
\begin{align}
\label{eq:colform}
\sum_{\bfh^\T\in H_1} \bfh^\T &=\bfs_1^\T, &
\sum_{\bfh^\T\in H_2} \bfh^\T &=\bfs_2^\T.
\end{align}
There may be more than one way of finding $H_1$ and $H_2$. If we can find $H_i$ such that $\abs{H_i}\leq 2$, then surely
\[
\bfs_1^\T,\bfs_2^\T\in\spn*{H_1\cup H_2} \qquad\text{and}\qquad \abs*{H_1\cup H_2}\leq 5,
\]
so we are done. In particular, we mention the extreme case of $\bfs_i^\T=\bzero^\T$ for which we can take $H_i=\emptyset$. 

We are therefore left with the case where all choices of $H_i$ satisfy $\abs{H_i}=3$. We now show that we can always find $H_1$ and $H_2$ such that $H_1\cap H_2\neq \emptyset$, and therefore $\abs{H_1\cup H_2}\leq 5$. Rewriting~\eqref{eq:colform} in full detail, we are looking for $x_1,\dots,x_6\in\F_{2^m}$ such that
\begin{align}
x_1+x_2+x_3 &= \beta_1, & x_1^3+x_2^3+x_3^3 &= \beta_2,  \label{eq:x123}\\
x_4+x_5+x_6&=\beta_3, & x_4^3+x_5^3+x_6^3&=\beta_4, \label{eq:x456}
\end{align}
and to ensure non-empty intersection of $H_1$ and $H_2$, we would like to have $x_1=x_4$.

By Lemma~\ref{lem:findL}, there exists $\lambda\in\F_{2^m}^*$ such that
\[
\tr\parenv*{\frac{\beta_1^3+\beta_2}{\lambda^3}}=\tr\parenv*{\frac{\beta_3^3+\beta_4}{(\lambda+\beta_1+\beta_3)^3}}=0.
\]
For convenience, let us denote
\begin{align*}
\theta_1&=\frac{\beta_1^3+\beta_2}{\lambda^3}, &
\theta_2&=\frac{\beta_3^3+\beta_4}{(\lambda+\beta_1+\beta_3)^3}.
\end{align*}
The fact that $\abs{H_1}=3$ implies that $\beta_2\neq\beta_1^3$, for otherwise, we could choose $x_1=x_2=x_3=\beta_1$, thus only using one column of $H$, i.e., $H_1=1$. Similarly, $\abs{H_2}=3$ implies that $\beta_4\neq\beta_3^3$. By Lemma~\ref{lem:singlesyndrome}, \eqref{eq:x123} is solvable with
\[
x_1=\beta_1+\sqrt[3]{\frac{\beta_1^3+\beta_2}{\theta_1}}=\lambda+\beta_1.
\]
Similarly, by Lemma~\eqref{lem:singlesyndrome}, \eqref{eq:x456} is solvable with
\[
x_4=\beta_3+\sqrt[3]{\frac{\beta_3^3+\beta_4}{\theta_2}}=(\lambda+\beta_1+\beta_3)+\beta_3=\lambda+\beta_1,
\]
which proves our claim.
\end{proof}

\begin{remark}
\label{rem:m56}
The proof of Lemma~\ref{lem:findL} uses bounds on character sums of rational functions. For small values of $m$ these are not tight enough. However, a simple computer search has verified that the claim presented in Lemma~\ref{lem:findL} also holds for $m=5,6$.
Thus, by relying on this computer search, the upper bound of Theorem~\ref{th:upper} also holds for $m=5,6$.
\end{remark}

\section{Lower Bound}
\label{sec:lower}

We show a general lower bound on $R_2(\bch(e,m))$, which for $e=2$ will be sufficient for the proof of Theorem~\ref{th:main}, except for the case of $m=4$.

\begin{theorem}
\label{th:lower}
For $e\geq 2$ and $m$ such that $2^{\ceil{m/2}}\geq 2e-1$, we have
\[
R_2(\bch(e,m)) \geq 3e-1.
\]
\end{theorem}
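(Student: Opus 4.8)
The plan is to reduce the bound to the construction of a single well-chosen pair of syndromes, via a symmetric-difference averaging argument. For a syndrome $\bfs^\T\in\F_2^{me}$, write $\mu(\bfs)$ for the least number of columns of $H(e,m)$ whose sum is $\bfs^\T$, i.e. the minimum weight in the coset of $\bfs$. Since we work over $\F_2$, ``spanning'' in Definition~\ref{def:gcr} just means ``expressible as a subset sum''. Suppose a set $C$ of columns satisfies $\set{\bfs_1^\T,\bfs_2^\T}\subseteq\spn{C}$, say $\bfs_1^\T=\sum_{\bfh^\T\in A}\bfh^\T$ and $\bfs_2^\T=\sum_{\bfh^\T\in B}\bfh^\T$ with $A,B\subseteq C$. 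Then $A\symdif B\subseteq C$ is a subset-sum representation of $\bfs_1^\T+\bfs_2^\T$ (overlaps cancel in characteristic $2$), and from the identity $\abs{A\cup B}=\tfrac12(\abs{A}+\abs{B}+\abs{A\symdif B})$ together with $\abs{A}\ge\mu(\bfs_1)$, $\abs{B}\ge\mu(\bfs_2)$, $\abs{A\symdif B}\ge\mu(\bfs_1+\bfs_2)$, I obtain
\[
\abs{C}\ge\abs{A\cup B}\ge\tfrac12\parenv*{\mu(\bfs_1)+\mu(\bfs_2)+\mu(\bfs_1+\bfs_2)}.
\]
Hence it suffices to exhibit $\bfs_1^\T,\bfs_2^\T$ for which all three of $\bfs_1^\T,\bfs_2^\T,\bfs_1^\T+\bfs_2^\T$ have coset-leader weight at least $2e-1$: for such a pair every covering set $C$ has $\abs{C}\ge\ceil{(6e-3)/2}=3e-1$, giving $R_2(\bch(e,m))\ge 3e-1$.

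The remaining task is to build a ``hard triple'': three syndromes, closed under the single sum $\bfs_1+\bfs_2$, none of which lies in $V_{2e-2}$, where $V_w$ is the set of syndromes expressible by at most $w$ columns. That $V_{2e-2}$ is a proper subset of the whole space is exactly the known bound $R_1(\bch(e,m))\ge 2e-1$, and the hypothesis $2^{\ceil{m/2}}\ge 2e-1$ makes $H(e,m)$ full rank, so the ambient syndrome space is all of $\F_2^{me}$, of size $2^{me}$. The natural assembly is a union bound: fix $\bfs_1^\T\notin V_{2e-2}$ and then pick $\bfs_2^\T$ avoiding both forbidden sets $V_{2e-2}$ and $\bfs_1^\T+V_{2e-2}$, which is possible whenever $2\abs{V_{2e-2}}<2^{me}$.

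For $e=2$ this succeeds cleanly, and this is precisely the case needed for Theorem~\ref{th:main}: since $\bch(2,m)$ has minimum distance at least $5$, distinct errors of weight at most $2$ have distinct syndromes, so $\abs{V_2}=1+(2^m-1)+\binom{2^m-1}{2}=(2^{2m}-2^m+2)/2$, whence $2\abs{V_2}=2^{2m}-2^m+2<2^{2m}$ for $m\ge 3$, and the hard triple exists. I expect the main obstacle to be the general case $e\ge 3$: there, weight-$(2e-2)$ errors can collide (two such errors differ by a codeword of weight up to $4e-3\ge 2e+1$), so the crude estimate $\abs{V_{2e-2}}\le\sum_{w\le 2e-2}\binom{2^m-1}{w}$ overshoots $2^{me}$ and the union bound is unavailable. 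Controlling $\abs{V_{2e-2}}$ — or directly certifying a hard triple — is the crux, and this is where I anticipate needing the character-sum machinery of Lemma~\ref{lem:charsum} in the spirit of Lemma~\ref{lem:findL}: translate ``$\bfs\notin V_{2e-2}$'' into the non-existence of a locator of degree at most $2e-2$, and estimate the number of candidate $\bfs_2$ violating either non-representability condition, as is done for the pair of trace constraints in the upper bound. I would also start from an explicit $R_1$-deep hole and examine its multiplicative orbit $\mathrm{Diag}(c,c^3,\dots,c^{2e-1})\bfs_1$ (which preserves coset-leader weight) to locate a second hard syndrome whose sum with $\bfs_1$ is again hard, likely isolating a lower bound on $m$ growing with $e$ and treating small cases separately, mirroring Remark~\ref{rem:m56}.
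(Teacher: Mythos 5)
Your reduction is sound, and it is in fact the same combinatorial core as the paper's own argument: the paper bounds $\abs{\supp(\bfx_1)\cap\supp(\bfx_2)}$ by inclusion--exclusion and derives a contradiction from $\wt(\bfx_1+\bfx_2)\leq 2e-2$, which is exactly your inequality $\abs{A\cup B}\geq\tfrac12\parenv{\mu(\bfs_1)+\mu(\bfs_2)+\mu(\bfs_1+\bfs_2)}$ in disguise. So the reduction to exhibiting a ``hard triple'' --- two syndromes such that $\bfs_1^\T$, $\bfs_2^\T$, and $\bfs_1^\T+\bfs_2^\T$ all have coset-leader weight at least $2e-1$ --- is correct, and your union-bound construction of such a pair does go through for $e=2$.

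The gap is that the theorem is stated for all $e\geq 2$, and your construction of the hard triple covers only $e=2$: you explicitly leave $e\geq 3$ open, where the union bound fails because $\abs{V_{2e-2}}$ cannot be controlled by counting errors, and the character-sum route you sketch is speculation, not a proof. The missing idea is structural rather than counting-based. Take
\[
\bfs_1^\T=\begin{pmatrix}0\\ \vdots\\ 0\\ \beta_1\end{pmatrix},\qquad
\bfs_2^\T=\begin{pmatrix}0\\ \vdots\\ 0\\ \beta_2\end{pmatrix},
\]
as vectors over $\F_{2^m}^{e}$, with $\beta_1,\beta_2\in\F_{2^m}^*$ distinct. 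If $H(e,m)\bfx^\T$ equals any one of $\bfs_1^\T$, $\bfs_2^\T$, or $\bfs_1^\T+\bfs_2^\T$ (all three are zero in the first $e-1$ coordinates and nonzero in the last), then deleting the last row gives $H(e-1,m)\bfx^\T=\bzero^\T$, so $\bfx$ is a nonzero codeword of $\bch(e-1,m)$ and hence $\wt(\bfx)\geq 2(e-1)+1=2e-1$ by the BCH bound. This certifies all three coset-leader weights at once, for every $e\geq 2$, with no estimate on $\abs{V_{2e-2}}$ required; the hypothesis $2^{\ceil{m/2}}\geq 2e-1$ is only needed to make $H(e,m)$ full rank so that these are legitimate syndromes, as you correctly noted. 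With this substitution your argument becomes a complete proof; as written, it establishes the theorem only for $e=2$.
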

\begin{proof}
Assume to the contrary that $R_2(\bch(e,m))\leq 3e-2$. We choose the following two vectors from $\F_{2^m}^{e}$:
\begin{align*}
\bfs^\T_1 & = \begin{pmatrix} 0 \\ \vdots \\ 0 \\ \beta_1 \end{pmatrix}, &
\bfs^\T_2 & = \begin{pmatrix} 0 \\ \vdots \\ 0 \\ \beta_2 \end{pmatrix},
\end{align*}
where $\beta_1,\beta_2\in\F_{2^m}^*$, $\beta_1\neq\beta_2$. Recall that the length of the code $\bch(e,m)$ is $n=2^m-1$. By definition~\ref{def:gcr}, there exist $\bfx_1,\bfx_2\in\F_2^n$ such that
\begin{align*}
H(e,m)\bfx_1^\T &= \bfs_1^\T, & H(e,m)\bfx_2^\T &= \bfs_2^\T,
\end{align*}
where
\begin{equation}
\label{eq:supp}
\abs*{\supp(\bfx_1)\cup\supp(\bfx_2)} \leq 3e-2.
\end{equation}
Obviously, $\bfx_1,\bfx_2\neq \bzero$, as well as $\bfx_1+\bfx_2\neq\bzero$.

By removing the last row of $\bfs_1^\T$, $\bfs_2^\T$, and $H(e,m)$, we obtain
\[
H(e-1,m)\bfx_1^\T=H(e-1,m)\bfx_2^\T=\bzero^\T.
\]
Namely, $\bfx_1$ and $\bfx_2$ are two distinct non-zero codewords of $\bch(e-1,m)$, as is also $\bfx_1+\bfx_2$, by linearity. The minimum distance of $\bch(e-1,m)$ is at least $2e-1$, and so
\begin{equation}
\label{eq:wt}
\wt(\bfx_1),\wt(\bfx_2),\wt(\bfx_1+\bfx_2) \geq 2e-1.
\end{equation}
Then, by~\eqref{eq:supp},
\begin{align*}
\abs*{\supp(\bfx_1)\cap\supp(\bfx_2)} &\geq \wt(\bfx_1)+\wt(\bfx_2)-\abs*{\supp(\bfx_1)\cup\supp(\bfx_2)}\\
& \geq 2(2e-1)-(3e-2) = e.
\end{align*}
But then,
\begin{align*}
\wt(\bfx_1+\bfx_2) &\leq 3e-2-\abs*{\supp(\bfx_1)\cap\supp(\bfx_2)} \leq 2e-2,
\end{align*}
which contradicts~\eqref{eq:wt}.
\end{proof}

For the single case of $m=4$ we have a stronger lower bound.

\begin{theorem}
\label{th:lowerm4}
We have
\[R_2(\bch(2,4))\geq 6.\]
\end{theorem}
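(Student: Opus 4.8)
The plan is to argue by contradiction, sharpening the support-counting of Theorem~\ref{th:lower} in the single case $e=2,\ m=4$. Here $n=2^4-1=15$ and the columns of $H(2,4)$ are $\bfh^\T=(\gamma,\gamma^3)^\T$ with $\gamma\in\F_{16}^*$. The feature that makes $m=4$ exceptional is that $3\mid 2^4-1$: the cubing map $\gamma\mapsto\gamma^3$ is three-to-one onto the order-$5$ subgroup of cubes $C=\set{1,\alpha^3,\alpha^6,\alpha^9,\alpha^{12}}$, its kernel being the order-$3$ subgroup $\mu_3=\set{1,\alpha^5,\alpha^{10}}$, so every fibre is a coset of $\mu_3$. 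I would take the two syndromes $\bfs_1^\T=(0,\beta_1)^\T$ and $\bfs_2^\T=(0,\beta_2)^\T$, where $\beta_1\neq\beta_2$ are two distinct nonzero cubes (concretely $\beta_1=1$, $\beta_2=\alpha^3$), and show they cannot be jointly spanned by five columns.

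First I would nail down the representations of each syndrome by columns. Since the first coordinate vanishes there is no representation of weight $\le 2$, and by Lemma~\ref{lem:singlesyndrome} (and its proof) a column $\gamma$ occurs in a weight-$3$ representation of $(0,c)$ exactly when $\tr(c/\gamma^3)=0$. As $\gamma$ runs over $\F_{16}^*$, the value $c/\gamma^3$ runs over the coset $cC$; the crucial $\F_{16}$-specific fact is that the subgroup $C$ contains exactly one element of trace $0$, namely $1$ (all other cubes lie in a single Frobenius orbit of trace $1$), while $\tr(1)=0$ because $m$ is even. Applied to $c=\beta_i\in C$, this forces $\beta_i/\gamma^3=1$, i.e.\ $\gamma^3=\beta_i$, so $\bfs_i$ has a \emph{unique} weight-$3$ representation, the coset $\cR_i$ of $\mu_3$ consisting of the cube roots of $\beta_i$. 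As $\beta_1\neq\beta_2$, these satisfy $\cR_1\cap\cR_2=\emptyset$.

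Next I would locate the weight-$3$ representations of the combined syndrome $\bfs_1^\T+\bfs_2^\T=(0,\beta_1+\beta_2)^\T$ and show that they avoid $\cR_1\cup\cR_2$ altogether. A column $\gamma\in\cR_i$ (so $\gamma^3=\beta_i$) could occur only if $\tr\big((\beta_1+\beta_2)/\beta_i\big)=0$; but $(\beta_1+\beta_2)/\beta_i=1+\beta_{3-i}/\beta_i$, and since $\tr(1)=0$ while $\beta_{3-i}/\beta_i$ is a non-identity cube of trace $1$, this trace equals $1\neq 0$. Hence no cube root of $\beta_1$ or $\beta_2$ appears in any weight-$3$ representation of $\bfs_1+\bfs_2$, so every such representation is position-disjoint from both $\cR_1$ and $\cR_2$.

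Finally I would run the contradiction. Assuming $R_2(\bch(2,4))\le 5$ produces $\bfx_1,\bfx_2\in\F_2^{15}$ with $H(2,4)\bfx_i^\T=\bfs_i^\T$ and $u\eqdef\abs{\supp(\bfx_1)\cup\supp(\bfx_2)}\le 5$; as in Theorem~\ref{th:lower}, $\bfx_1,\bfx_2,\bfx_1+\bfx_2$ are nonzero codewords of the $[15,11,3]$ Hamming code, each of weight $\ge 3$. Counting each coordinate of the union twice gives $\wt(\bfx_1)+\wt(\bfx_2)+\wt(\bfx_1+\bfx_2)=2u\le 10$, forcing $u=5$ and weight multiset $\set{3,3,4}$. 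If $\bfx_1,\bfx_2$ are the two weight-$3$ words their supports are $\cR_1,\cR_2$, whose disjointness gives $u=6>5$, a contradiction. Otherwise exactly one of $\bfx_1,\bfx_2$ has weight $3$ and $\bfx_1+\bfx_2$ has weight $3$; the two weight-$3$ supports are then some $\cR_i$ and a weight-$3$ representation $W$ of $\bfs_1+\bfs_2$, which are disjoint, yet the weight-$4$ word being their symmetric difference forces $\abs{\cR_i\cap W}=1$, again a contradiction. The main obstacle is exactly the rigidity step: proving that the two chosen syndromes, and their sum, have such tightly controlled low-weight representations, which hinges on the trace distribution on the cube subgroup $C$ — a phenomenon special to $\F_{16}$ and absent in the regimes covered by the upper bound.
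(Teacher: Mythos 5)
Your proof is correct and follows essentially the same route as the paper's: the same pair of syndromes, the same reliance on Lemma~\ref{lem:singlesyndrome} together with the trace values on the cubes of $\F_{16}$, and a case split that matches the paper's analysis of $(\abs{H_1},\abs{H_2})\in\{(3,3),(4,3),(3,4)\}$. Your repackaging --- unique weight-$3$ representations as cosets of the cube roots of unity, plus the identity $\wt(\bfx_1)+\wt(\bfx_2)+\wt(\bfx_1+\bfx_2)=2u$ forcing the weight multiset $\{3,3,4\}$ --- is a clean way to organize the same underlying computation.
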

\begin{proof}
Let $\alpha\in\F_{2^4}$ be a primitive element, and consider the parity-check matrix $H(2,4)$ given in Definition~\ref{def:bch}. Denote the $i$-th column of $H(2,4)$ by $\bfh_i^\T$, namely,
\[
\bfh_i^\T=\begin{pmatrix} \alpha \\ \alpha^{3i} \end{pmatrix},
\]
and define the set of columns of $H(2,4)$ by $H\eqdef\set{\bfh_1^\T,\dots,\bfh_{15}^\T}$.

We define the following two syndromes:
\begin{align*}
\bfs_1^\T &= \begin{pmatrix} 0 \\ 1 \end{pmatrix}, &
\bfs_2^\T &= \begin{pmatrix} 0 \\ \alpha^3 \end{pmatrix}.
\end{align*}
By Theorem~\ref{th:lower}, we need at least five columns from $H$ to simultaneously span both $\bfs_1^\T$ and $\bfs_2^\T$. We contend that this is insufficient, and at least six columns are required.

Assume to the contrary there exists $H'\subseteq H$, $\abs{H'}=5$, such that $\bfs_1^\T,\bfs_2^\T\in\spn{H'}$. Thus, there exist two subsets, $H_1,H_2\subseteq H'$, $H_1\cup H_2=H'$, such that
\begin{align*}
\sum_{\bfh^\T\in H_1}\bfh^\T &= \bfs_1^\T, &
\sum_{\bfh^\T\in H_2}\bfh^\T &= \bfs_2^\T.
\end{align*}
Our first observation is that $\abs{H_1},\abs{H_2}\geq 3$ since there is no way of getting the $0$  at the top of $\bfs_i^\T$ by summing fewer than three columns of $H$. We cannot have $H_1=H_2=H'$ since $\bfs_1^\T\neq\bfs_2^\T$. We note that 
\[
\sum_{\bfh^\T\in H_1\symdif H_2} \bfh^\T = \sum_{\bfh^\T\in H_1} \bfh^\T + \sum_{\bfh^\T\in H_2} \bfh^\T = \begin{pmatrix} 0 \\ 1+\alpha^3 \end{pmatrix},
\]
where $H_1\symdif H_2$ denotes the symmetric difference of $H_1$ and $H_2$. We now claim that we cannot have $\abs{H_1\cap H_2}\geq 3$ since then $1\leq \abs{H_1\symdif H_2}\leq 2$, and we get a $0$ as the top component when summing fewer than three columns of $H$. 
We are therefore left with only the following cases.

\emph{Case 1:} $\abs{H_1}=\abs{H_2}=3$. Since in this case $\abs{H_1\cap H_2}=1$, there exist distinct $x_1,\dots,x_5\in\F_{2^4}$ such that
\begin{align}
x_1+x_2+x_3 &= 0 & x_1^3+x_2^3+x_3^3 &= 1, \label{eq:m4x123}\\
x_1+x_4+x_5 &= 0 & x_1^3+x_4^3+x_5^3 &= \alpha^3. \label{eq:m4x145}
\end{align}
By applying Lemma~\ref{lem:singlesyndrome} to~\eqref{eq:m4x123} and~\eqref{eq:m4x145}, there must exist $\theta_1,\theta_2\in\F_{2^4}^*$ such that
\begin{equation}
\label{eq:theta12trace}
\tr(\theta_1)=\tr(\theta_2)=0,
\end{equation}
and
\begin{equation}
\label{eq:theta12cubic}
x_1=\sqrt[3]{\frac{1}{\theta_1}}=\sqrt[3]{\frac{\alpha^3}{\theta_2}}.
\end{equation}
This implies that $\theta_1$ has a cubic root. Since in $\F_{2^4}^*$ we have
\begin{align}
\tr(\alpha^0)& =0, &
\tr(\alpha^3)=\tr(\alpha^6)=\tr(\alpha^9)=\tr(\alpha^{12})=1, \label{eq:f16traces}
\end{align}
it follows that $\theta_1=1$, the only element in $\F_{2^4}^*$ that has a cubic root and trace $0$. 

By~\eqref{eq:theta12cubic}, we then have $\theta_2=\alpha^3\neq 1$, so by~\eqref{eq:f16traces}, $\tr(\theta_2)\neq 0$, contradicting~\eqref{eq:theta12trace}.

\emph{Case 2:} $\abs{H_1}=4$ and $\abs{H_2}=3$. Thus, there exist distinct $x_1,\dots,x_5\in\F_{2^4}$ such that
\begin{align}
x_2+x_3+x_4+x_5 &= 0 & x_2^3+x_3^3+x_4^3+x_5^3 &= 1, \label{eq:m4xx2345}\\
x_1+x_2+x_3 &= 0 & x_1^3+x_2^3+x_3^3 &= \alpha^3. \label{eq:m4xx123}
\end{align}
Summing~\eqref{eq:m4xx2345} and~\eqref{eq:m4xx123} we get
\begin{align}
x_1+x_4+x_5 &= 0 & x_1^3+x_4^3+x_5^3 &= \alpha^3+1. \label{eq:m4xx145}
\end{align}
Again, by applying Lemma~\ref{lem:singlesyndrome} to~\eqref{eq:m4xx123} and~\eqref{eq:m4xx145}, there must exist $\theta_1,\theta_2\in\F_{2^4}^*$ such that
\begin{equation}
\label{eq:xtheta12trace}
\tr(\theta_1)=\tr(\theta_2)=0,
\end{equation}
and
\[x_1=\sqrt[3]{\frac{\alpha^3}{\theta_1}}=\sqrt[3]{\frac{\alpha^3+1}{\theta_2}}.\]
This implies that $\theta_1$ must have a cubic root, and by~\eqref{eq:f16traces}, it can only be $\theta_1=1$. We then get
\[
\theta_2=1+\alpha^{-3}.
\]
Observe that $\alpha^{-3}$ has a cubic root, and $\alpha^{-3}\neq 1$, so by~\eqref{eq:f16traces}, $\tr(\alpha^{-3})=1$. But then
\[
\tr(\theta_2)=\tr(1+\alpha^{-3})=\tr(1)+\tr(\alpha^{-3})=1,
\]
contradicting~\eqref{eq:xtheta12trace}.

\emph{Case 3:} $\abs{H_1}=3$ and $\abs{H_2}=4$. Then there exist distinct $x_1,\dots,x_5\in\F_{2^4}$ such that
\begin{align}
x_2+x_3+x_4+x_5 &= 0 & x_2^3+x_3^3+x_4^3+x_5^3 &= \alpha^3, \label{eq:m4xxx2345}\\
x_1+x_2+x_3 &= 0 & x_1^3+x_2^3+x_3^3 &= 1. \label{eq:m4xxx123}
\end{align}
Summing~\eqref{eq:m4xxx2345} and~\eqref{eq:m4xxx123} we get
\begin{align}
x_1+x_4+x_5 &= 0 & x_1^3+x_4^3+x_5^3 &= \alpha^3+1. \label{eq:m4xxx145}
\end{align}
By applying Lemma~\ref{lem:singlesyndrome} to~\eqref{eq:m4xxx123} and~\eqref{eq:m4xxx145}, there must exist $\theta_1,\theta_2\in\F_{2^4}^*$ such that
\begin{equation}
\label{eq:xxtheta12trace}
\tr(\theta_1)=\tr(\theta_2)=0,
\end{equation}
and
\[x_1=\sqrt[3]{\frac{1}{\theta_1}}=\sqrt[3]{\frac{\alpha^3+1}{\theta_2}}.\]
This implies that $\theta_1$ must have a cubic root, and by~\eqref{eq:f16traces}, it can only be $\theta_1=1$. We then get
\[
\theta_2=1+\alpha^{3}.
\]
Observe that $\alpha^{3}$ has a cubic root, and $\alpha^{3}\neq 1$, so by~\eqref{eq:f16traces}, $\tr(\alpha^{3})=1$. But then
\[
\tr(\theta_2)=\tr(1+\alpha^{3})=\tr(1)+\tr(\alpha^{3})=1,
\]
contradicting~\eqref{eq:xxtheta12trace}.
\end{proof}

\section{Conclusion}
\label{sec:conc}

With the combination of upper and lower bounds presented in the previous sections, we can prove the main theorem.

\begin{proof}[Proof of Theorem~\ref{th:main}]
By Theorem~\ref{th:upper} and Theorem~\ref{th:lower} we get the main claim for all $m\geq 7$. We are left with the cases of $m\in\set{3,4,5,6}$. Using a simple computer search, Remark~\ref{rem:m56} extends the main claim to $m=5,6$, namely,
\[
R_2(2,m)=5 \quad\text{for all $m\geq 5$.}
\]

For $m=4$ we have Theorem~\ref{th:lowerm4} giving us $R_2(2,4)\geq 6$. Since~\eqref{eq:bch1} implies any syndrome is spanned by at most three columns of $H(2,m)$, then any two syndromes are spanned by at most six. Hence, $R_2(2,4)\leq 6$, giving us
\[
R_2(2,4)=6.
\]

Finally, for $m=3$, $\bch(2,3)$ is simply the binary repetition code of length $7$. It is known~\cite{EliWeiSch22}, that for a $q$-ary repetition code of length $n$, the order-$t$ covering radius is $n-\ceil{n/q^t}$. Thus, in our case
\[
R_2(2,3)=7-\ceil*{\frac{7}{2^2}}=5,
\]
and the proof is complete.
\end{proof}

This work takes a step towards a complete mapping of the generalized covering radius hierarchy of BCH codes. As an implication of Theorem~\ref{th:main} we mention its affect on higher order covering radii. It was shown in~\cite{EliFirSch21a} that for any code $\cC$, and any positive integers $t_1,t_2$,
\[
R_{t_1+t_2}(\cC)\leq R_{t_1}(\cC)+R_{t_2}(\cC).
\]
This subadditivity, together with Theorem~\ref{th:main}, implies that for all $t\geq 1$, and all $m\geq 5$,
\[
R_t(2,m) \leq \floor*{\frac{t}{2}}\cdot R_2(2,m) + (t \bmod 2)\cdot R_1(2,m) \leq \frac{5t+1}{2}.
\]

Many open questions remain. In particular, we mention extending the results to higher order covering radii for $\bch(2,m)$, and to the triple-error-correcting $\bch(3,m)$.

\bibliographystyle{elsarticle-num} 
\bibliography{allbib}

\begin{thebibliography}{10}
\expandafter\ifx\csname url\endcsname\relax
  \def\url#1{\texttt{#1}}\fi
\expandafter\ifx\csname urlprefix\endcsname\relax\def\urlprefix{URL }\fi
\expandafter\ifx\csname href\endcsname\relax
  \def\href#1#2{#2} \def\path#1{#1}\fi

\bibitem{CohHonLitLob97}
G.~Cohen, I.~Honkala, S.~Litsyn, A.~Lobstein, Covering Codes, North-Holland,
  1997.

\bibitem{EliFirSch21a}
D.~Elimelech, M.~Firer, M.~Schwartz, The generalized covering radii of linear
  codes, IEEE Trans.~Inform.~Theory 67~(12) (2021) 8070--8085.

\bibitem{Wei91}
V.~K. Wei, Generalized {H}amming weights for linear codes, IEEE
  Trans.~Inform.~Theory 37~(5) (1991) 1412--1418.

\bibitem{LanRav23}
B.~Langton, N.~Raviv, A generalized covering algorithm for chained codes, in:
  Proceedings of the 2023 IEEE International Symposium on Information Theory
  (ISIT2023), Taipei, Taiwan, 2023, pp. 844--849.

\bibitem{HelKloMyk77}
T.~Helleseth, T.~Kl{\o}ve, J.~Mykkeltveit, The weight distribution of
  irreducible cyclic codes with block lengths {$n_1((q^l-1)/N)$}, Discrete
  Math. 18~(2) (1977) 179--211.

\bibitem{Klo78}
T.~Kl{\o}ve, The weight distribution of linear codes over {$\mathrm{GF}(q^l)$}
  having generator matrix over {$\mathrm{GF}(q)$}, Discrete Math. 23~(2) (1978)
  159--168.

\bibitem{Hel79}
T.~Helleseth, The weight distribution of the coset leaders for some classes of
  codes with related parity-check matrices, Discrete Math. 28~(2) (1979)
  161--171.

\bibitem{EliSch24}
D.~Elimelech, M.~Schwartz, The second-order football-pool problem and the
  optimal rate of generalized-covering codes, J.~Combin.~Theory Ser.~A 203
  (2024) 1--27.

\bibitem{RamRavTam24}
V.~Ramkumar, N.~Raviv, I.~Tamo, Access-redundancy tradeoffs in quantized linear
  computations, IEEE Trans.~Inform.~TheoryAccepted for publication (2024).

\bibitem{Hel85}
T.~Helleseth, On the covering radius of cyclic linear codes and arithmetic
  codes, Discrete Appl.~Math. 11 (1985) 157--173.

\bibitem{GabKlo98}
E.~M. Gabidulin, T.~Kl{\o}ve, The {N}ewton radius of {MDS} codes, in:
  Proceedings of the Information Theory Workshop, 1998, pp. 50--51.

\bibitem{BarGiuPla15}
D.~Bartoli, M.~Giulietti, I.~Platoni, On the covering radius of {MDS} codes,
  IEEE Trans.~Inform.~Theory 61~(2) (2015) 801--811.

\bibitem{Dur94}
A.~D{\"u}r, On the covering radius of {Reed-Solomon} codes, Discrete Math.
  126~(1-3) (1994) 99--105.

\bibitem{Ost99}
P.~R.~J. {\"{O}}sterg{\aa}rd, New constructions of {$q$}-ary covering codes,
  Ars Combinatoria 52 (1999) 51--63.

\bibitem{ShiHelOzbSol22}
M.~Shi, T.~Helleseth, F.~{\"O}zbudak, P.~Sol{\'e}, Covering radius of {M}elas
  codes, IEEE Trans.~Inform.~Theory 68~(7) (2022) 4354--4364.

\bibitem{ShiHelOzb23}
M.~Shi, T.~Helleseth, F.~{\"O}zbudak, Covering radius of generalized
  {Z}etterberg type codes over finite fields of odd characteristic, IEEE
  Trans.~Inform.~Theory 69~(11) (2023) 7025--7048.

\bibitem{HelKloMyk78}
T.~Helleseth, T.~Kl\o{}ve, J.~Mykkeltveit, On the covering radius of binary
  codes, IEEE Trans.~Inform.~Theory 24~(5) (1978) 627--628.

\bibitem{CohLit92}
G.~Cohen, S.~Litsyn, On the covering radius of {R}eed-{M}uller codes, Discrete
  Math. 106 (1992) 147--155.

\bibitem{Hou93}
X.-D. Hou, Further results on the covering radii of the {R}eed-{M}uller codes,
  Designs, Codes and Cryptography 3~(2) (1993) 167--177.

\bibitem{Hou97}
X.-D. Hou, On the norm and covering radius of the first-order {R}eed-{M}uller
  codes, IEEE Trans.~Inform.~Theory 43~(3) (1997) 1025--1027.

\bibitem{Hou06}
X.-D. Hou, Some results on the covering radii of {R}eed-{M}uller codes, IEEE
  Trans.~Inform.~Theory 39~(2) (2006) 366--378.

\bibitem{CarMes06}
C.~Carlet, S.~Mesnager, Improving the upper bounds on the covering radii of
  binary {R}eed-{M}uller codes, IEEE Trans.~Inform.~Theory 53~(1) (2006)
  162--173.

\bibitem{CohKarMatSch85}
G.~D. Cohen, M.~G. Karpovsky, H.~F. {Mattson Jr.}, J.~R. Schatz, Covering
  radius -- survey and recent results, IEEE Trans.~Inform.~Theory IT-31~(3)
  (1985) 328--343.

\bibitem{JinChaLeeCheChe10}
M.-H. Jing, Y.~Chang, C.-D. Lee, J.-H. Chen, Z.-H. Chen, A result on
  {Z}etterberg codes, IEEE Comm.~Letters 14~(7) (2010) 662--663.

\bibitem{EliWeiSch22}
D.~Elimelech, H.~Wei, M.~Schwartz, On the generalized covering radii of
  {Reed-Muller} codes, IEEE Trans.~Inform.~Theory 68~(7) (2022) 4378--4391.

\bibitem{GorPetZie60}
D.~Gorenstein, W.~W. Peterson, N.~Zierler, Two-error correcting
  {Bose-Chaudhuri} codes are quasi-perfect, Inform.~and Control 3 (1960)
  291--294.

\bibitem{HorBer76}
J.~A. {Van der Horst}, T.~Berger, Complete decoding of triple-error-correcting
  binary {BCH} codes, IEEE Trans.~Inform.~Theory IT-22 (1976) 138--147.

\bibitem{AssMat76}
E.~F. {Assmus Jr.}, H.~F. {Mattson Jr.}, Some 3-error correcting {BCH} codes
  have covering radius 5, IEEE Trans.~Inform.~Theory IT-22 (1976) 348--349.

\bibitem{Hel78}
T.~Helleseth, All binary 3-errors correcting {BCH} codes of length {$2^m-1$}
  have covering radius 5, IEEE Trans.~Inform.~Theory IT-24 (1978) 257--258.

\bibitem{Coh97}
S.~D. Cohen, The length of primitive {BCH} codes with minimal covering radius,
  Designs, Codes and Cryptography 10 (1997) 5--16.

\bibitem{LidNie97}
R.~Lidl, H.~Niederreiter, Finite Fields, Cambridge University Press, 1997.

\bibitem{CocPin06}
T.~Cochrane, C.~Pinner, Using {S}tepanov's method for exponential sums
  involving rational functions, J.~of Number Theory 116~(2) (2006) 270--292.

\bibitem{Ber84}
E.~Berlekamp, Algebraic Coding Theory (Revised Edition), Aegean Park Press,
  1984.

\end{thebibliography}

\end{document}